\newtheorem{lemma}{Lemma}
\newtheorem{theorem}{Theorem}
\newenvironment{mechanism}[1]
  {\innermech}
  {\endinnermech}
\DeclareMathOperator*{\argmax}{arg\,max}
\newcommand{\rulesep}{\unskip\ \hrule\ }
\newcommand*{\Scale}[2][4]{\scalebox{#1}{$#2$}}%
\begin{document}
%
\title{Mechanism Design with Exchangeable Allocations}
\author{
Qiang Zhang\\
\texttt{csqzhang@gmail.com}\\
}
\maketitle

\begin{abstract}
	We investigate mechanism design without payments when agents have different types of preferences. Contrary to most settings in the literature where agents have the same preference, e.g. in the facility location games all agents would like to stay close to (or  away from) the facility, we demonstrate the limitation of mechanism design without payments when agents have different preferences by introducing \textit{exchanging phases}. We consider two types of exchanging phases. The first model is called \textit{central exchanges} where the exchanges are performed by a central authority. The other model is called \textit{individual exchanges} where agents exchange their outcomes by themselves. By using facility location games as an example, we provide a truthful mechanism that optimizes social welfare in central exchanges. We also provide a universally truthful randomized  mechanism that achieves at least a half of the optimal social welfare in individual exchanges.
\end{abstract}

\section{Introduction}
Mechanism design aims to achieve desirable social outcomes when the partial but necessary information are possessed by selfish agents. To resolve this problem, \emph{truthful} mechanisms are designed to provide selfish agents incentives for declaring their true private information.  
Incentives are usually injected by the payments in those mechanisms such as auctions. Recently, \emph{mechanism design without payments} was proposed in~\cite{procaccia2009approximate}. They showed that, when the domain of agents' preferences is highly structured, it is possible to design truthful mechanisms without resorting to payments and achieve (approximately) good outcomes at the same time. Since then, the framework of mechanism design without payments has been successfully applied to different applications, such as voting, facility locations and machine scheduling. In this paper, we show  a limitation of the framework of mechanism design without payments in some settings, especially where agents have different preferences. As there is no payment, after social outcomes (or allocations for agents) being chosen by mechanisms, selfish agents could  \emph{exchange their allocations with others to their mutual benefits}. As mechanism designers, we should care about the final social outcomes, resulting from these \emph{exchanging phases}. In this paper, we call this problem \textit{mechanism design with exchangeable allocations}. Now let us consider an illustrative example on facility locations games.
\paragraph{Facility location games.}
In a facility location game, the location of a facility is to be determined based on the preferences reported by selfish agents. It is often to  assume that agents have certain preference structures.
For example, agents have single-peaked preferences if they prefer to stay close to the facility~\cite{procaccia2009approximate}, agents have obnoxious preferences if they prefer to be far away from the facility~\cite{cheng2011mechanisms}, or agents have double-peaked preferences if they prefer to stay at particular distance away from the facility~\cite{aris2015doublepeak}. In those studies, the private information of agents are their locations, which may be easily verified by the central authority. Now we assume that the locations of agents are the public information but mechanisms do not know the preference structures of agents. We consider two different types of preferences as follows. 
\begin{itemize}
	\item `Like'  preference (or single-peaked preference): agents would like to minimize their distances to the facility. We denote this preference by $L$.
	\item `Dislike' preference (or obnoxious preference): agents would like to maximize their distances away from the facility. We denote this preference by $H$.
\end{itemize}
Now let us illustrate the mechanism design with exchangeable allocations by the instance in Figure~\ref{fig:exp1}, where five agents are located in a line segment of a length of $8$ km. The locations of agents are public information. Agents are asked to report their preferences, like ($L$) or dislike ($H$). Figure~\ref{fig:exp1:a} shows the distances between agents and their true preferences. In this case, if we follow the approach of mechanism design without payments, then it is easy to verify that locating the facility that optimizes social welfare is a truthful mechanism. That is, there is no incentive for agents to misreport their preferences regardless the preferences reported by other agents. Thus, the facility will be located at  $5$km (the location of agent $3$) in this instance. As we mentioned before, after the facility being built, selfish agents could trade their houses. We consider that the classical top trading cycle (TTC) algorithm~\cite{shapley1974cores} is used here. Therefore, agent $4$ could trade her location with agent $1$, which results in that agent $4$ is $5$km away from the facility. Now, consider the same instance except that agent $4$ misreports her preference, shown as in Figure~\ref{fig:exp1:b}. To maximize the social welfare, the facility is built at $7$km (the location of agent $4$). Yet, once built, agent $4$ could trade her location with agent $1$ so that she is $7$km away from the facility. It provides her a better outcome than in Figure~\ref{fig:exp1:a} when she reported truthfully. Readers may notice that this situation would not happen if all agents have the same preference, $L$ or $H$. However, as mechanism designers, we are interested in dealing with the case when selfish agents have different preferences or even conflicted preferences. To summarize, using facility location games, we demonstrate a problem in the approach of mechanism design without payments. The insight is that mechanism designers or social planners should consider what will happen after the outcomes or allocations are chosen by mechanisms. Here we consider that agents are able to exchange their allocations for their mutual benefit. 

\begin{figure}[h]%
	\centering 
	\subfigure[Agent $a_4$ reports her preference ``L'' truthfully]{
		\label{fig:exp1:a}
		\begin{tikzpicture}[dot/.style={circle,inner sep=1pt,fill,label={#1}},
		extended line/.style={shorten >=-#1,shorten <=-#1},
		extended line/.default=1cm]
		\draw (0,1.3) -- (8,1.3);
		\node [draw=none] at (0.3,1) {$0$ km};
		\node [dot] at (0,1.3) {};
		
		\node [dot] at (1,1.3) {};
		\node [dot] at (2,1.3) {};
		\node [dot] at (3,1.3) {};
		\node [dot] at (4,1.3) {};
		\node [dot] at (5,1.3) {};
		\node [dot] at (6,1.3) {};
		\node [dot] at (7,1.3) {};
		
		\node [draw=none] at (7.7,1) {$8$ km};
		\node [dot] at (8,1.3) {};

		\node [dot=$L$] at (0,0) {};
		\node[draw=none] at (0,-0.3) {$a_1$};
		\node[draw=none] at (0.5,-0.3) {1};
		
		\node [dot=$H$] at (1,0) {};
		\node[draw=none] at (1,-0.3) {$a_2$};
		\node[draw=none] at (3,-0.3) {4};
		
		\node [dot=$L$] at (5,0) {};
		\node[draw=none] at (5,-0.3) {$a_3$};
		\node[draw=none] at (6,-0.3) {2};
		
		\node [dot=$H$] at (7,0) {};
		\node[draw=none] at (7,-0.3) {$a_4$};
		\node[draw=none] at (7.5,-0.3) {1};
		
		\node [dot=$L$] at (8,0) {};
		\node[draw=none] at (8,-0.3) {$a_5$};
		\draw (0,0) -- (8,0);
		\end{tikzpicture}}
	\subfigure[Agent $a_4$ misreports her preference as ``H'']{
		\label{fig:exp1:b}
		\begin{tikzpicture}[dot/.style={circle,inner sep=1pt,fill,label={#1},name={#1}},
		extended line/.style={shorten >=-#1,shorten <=-#1},
		extended line/.default=1cm]
		\node [dot=$L$] at (0,0) {};
		\node[draw=none] at (0,-0.3) {$a_1$};
		\node[draw=none] at (0.5,-0.3) {1};
		
		\node [dot=$H$] at (1,0) {};
		\node[draw=none] at (1,-0.3) {$a_2$};
		\node[draw=none] at (3,-0.3) {4};
		
		\node [dot=$L$] at (5,0) {};
		\node[draw=none] at (5,-0.3) {$a_3$};
		\node[draw=none] at (6,-0.3) {2};
		
		\node [dot=$L$] at (7,0) {};
		\node[draw=none] at (7,-0.3) {$a_4$};
		\node[draw=none] at (7.5,-0.3) {1};
		
		\node [dot=$L$] at (8,0) {};
		\node[draw=none] at (8,-0.3) {$a_5$};
		\draw (0,0) -- (8,0);
		\end{tikzpicture}}
	\caption{Example 1}
	\label{fig:exp1}
\end{figure}	
\paragraph{Cooperative Games.} 
Cooperative games study how agents form a coalition and benefit together. The focus of mechanism design with exchangeable allocations is \textit{not} forming coalition. For example, in the facility location game in Figure~\ref{fig:exp1}, it is not necessary for agent $4$ to have a consensus with agent $1$ before her misreport. The assumption  we made is that agents are utility maximizers. Otherwise, agent $1$ may prefer to exchange her location with other agents rather than agent $4$.

\paragraph{Top Trading Cycle.} In this paper, we use the top trading cycle (TTC) algorithm~\cite{shapley1974cores} to simulate the exchange phase. There are different exchange protocols to trade individual good, e.g. deferred acceptance algorithm. The reasons that we choose the TTC algorithm over other choices are two-fold. First, the final outcome produced by the TTC algorithm is a core outcome, i.e., there is no other outcome in which all agents could be better off by exchanging their locations in a different way. It also implies that no exchange is possible afterwards. Second, as we are interested in truthful mechanisms, it is essential to have truthful exchange protocols. It is well-known that it is never advantageous to an agent to lie about its preference in the TTC algorithm. 


\subsection{Our approaches and results.}
We investigate mechanism design without payments when agents are able to exchange their allocations. We first formalize the problem as mechanism design with exchangeable allocations. Then, we study it in facility location games. The first natural approach is to take exchanging phase into account when we design truthful mechanisms. In other words, we are interested in mechanisms that could  exchange the locations of agents on their behalf. We call it central exchanges model. In this model, a mechanism is truthful if no agent could benefit by misreporting her preference and then exchanging her allocation with other agents. The first observation is that the design of such mechanism will highly depend on the exchange procedure. In this paper, we assume that the TTC algorithm is used in exchange phases. 
In addition, we consider a different benchmark. As agents may exchange their locations, better benchmarks would be the optimal social outcomes when one is able to relocate agents. In central exchanges model, we give a truthful mechanism that achieves the optimal social welfare.


Second, we understand that it is a strong assumption to allows mechanisms  to relocate agents. In Section~\ref{sec:individualExchange}, we investigated the case that mechanisms only determine the location of the facility and the exchanges are carried out by agents themselves. We call it individual exchanges model. We provide a simple universally truthful randomized mechanism, which guarantees that, after agents exchanging their locations, the social welfare it achieves is at least half of the social welfare achieved by any mechanism. Note that we do not require that agents exchanges their locations according to the TTC algorithm in individual exchanges model. We only assume that agents are utility maximizers and exchange for their mutual benefit. 


\subsection{Related Work}
Mechanism design without payments was coined  in~\cite{procaccia2009approximate} to study mechanism design problems where the goal is to optimize an objective under the restriction that no money transfer is possible. The problem considered in~\cite{procaccia2009approximate} is a facility location game where selfish agents with single-peaked preferences located on a line segment would like to minimize their distances to the facility.  They provided different deterministic and randomized truthful mechanisms for minimizing the total distance of agents or minimizing the maximum distance among agents. Later there is a significant amount of work
on facility location games with single-peaked preferences in literature both from social choice and computer science, e.g.~\cite{schummer2002strategy,lu2009tighter,lu2010asymptotically,fotakis2010winner,fotakis2012power,serafino2015truthful}.

Besides single-peaked preferences, facility locations games are also investigated under other preferences. For example, \cite{aris2015doublepeak} studied the case where agents have double-peaked preferences, in which agents would like to stay neither too close or too far away from the facility. \cite{zou2015facility} studied facility location games with dual preference where agents are either willing to stay close to or willing to stay away from the facility.

To the best of our knowledge, all existing work on facility location games deal with the case that the locations of agents are private information, i.e., the true locations of agents are only known to themselves. The primary reason is that most papers assume the preferences of agents and their preferences is unique. Arguably, the locations of agents are relatively easy to verify by central authority. Thus, in this paper we assume that there are two types of preference. However, the exact types of preferences are agents' private information. It is the first paper to study mechanism design on facility location games with different preferences when preferences are unknown to mechanism.

\section{Mechanism Design with Exchangeable Allocations}
\label{sec:MDTrading}
In this section, we present the general description on mechanism design with exchangeable allocations, which consists of two phases. The first phase is similar to mechanism design without payments in~\cite{procaccia2009approximate}. There is a central authority who aims to optimize an public known objective $\mathcal{F}$ by choosing an outcome from a set of all possible outcomes denoted by $O$. To achieve good outcomes, the authority implements a mechanism, say $\mathcal{M}$, which first asks a set of selfish agents, denoted by $A$, to  declare their private information and then chooses an outcome. In addition,  each agent $i \in A$ has an utility function $u_i(\cdot)$ for all possible outcomes. Let $c_i$ denote the private information of agent $i$ and $d_i$ be the private information declared by the agent $i$. Given $\mathbf{d}=<d_1,\ldots,d_n>$, Mechanism $\mathcal{M}$ chooses an outcome denoted by $\mathcal{M}(\mathbf{d})$. It is possible that agents would receive different outcomes, for example, agents would receive different resource in resource allocation problems. To allow the generality, we use $\mathcal{M}_i(\mathbf{d})$ to denote the outcome for agent $i$. Hence, the utility of agent $i$ is $u_i(\mathcal{M}_i(\mathbf{d}))$.\footnote{We assume there is no externalities.} 

The second phase involves an (possibly pre-defined) exchange procedure $\mathcal{E}$. Agents improve their utilities by exchanging their outcomes via procedure $\mathcal{E}$. For example, agent $i$ may exchange her outcome with agent $j$ if both agent $i$ and $j$ improve their utility. That is,
\begin{align*}
u_i(\mathcal{M}_j(\mathbf{d})) > u_i(\mathcal{M}_i(\mathbf{d}))  \mbox{ and } 
u_j(\mathcal{M}_i(\mathbf{d})) > u_j(\mathcal{M}_j(\mathbf{d})) 
\end{align*}
Note that the exchange procedure is not fixed and can be specified due to different purposes. For example, as we mentioned, in this paper, we are mainly interested in the top trading circle as the exchange procedure. Let $\mathcal{E}_i(\mathcal{M}(\mathbf{d}))$ be the outcome of agent $i$ after the exchanges. Hence, in the end agent $i$ has an utility $u_i(\mathcal{E}_i(\mathcal{M}(\mathbf{d})))$.

The goal of mechanism design with exchangeable allocations is to design mechanism that give agents incentive to declare their true private information regardless the information declared by others. Specifically , a mechanism $\mathcal{M}$ is truthful if it holds that
\[
u_i(\mathcal{E}_i(\mathcal{M}(\mathbf{d}_{-i},c_i)) \geq u_i(\mathcal{E}_i(\mathcal{M}(\mathbf{d}_{-i},d_i)) 
\]
for any $i, \mathbf{d}_{-i}$ and $c_i$ where $\mathbf{d}_{-i}=<d_1,\ldots, d_{i-1},d_{i+1},\ldots, d_n>$.

In the rest of paper, we use facility location games to demonstrate two different approaches to design truthful mechanism. 


\section{Central Exchanges}
In this section, we will focus on the central exchanges approach in facility location games, where mechanisms could assign new locations to agents after the location of facility has been decided. As we are interested in mechanisms that maximize social welfare, it implies that agents will have no incentives (or will not want) to exchange location themselves given their new locations.

\subsection{Model}
In a facility location game, a facility is to built in a line segment $[0,d]$, where $n$ agents are located. Agent $i$ is at point $x_i$ that is known to the public, and has a private preference (aka, type), Like ($L$) or Dislike $(H)$, denoted by $t_i$.  We will call the collection $\mathbf{x}=<x_1, \ldots, x_n>$ a \emph{location profile}, and the collection $\mathbf{t} = <t_1, \ldots, t_n>$ a \emph{preference profile}.
Furthermore, we say $\hat{\mathbf{x}}$ is \textit{permuted}  to $x$ if there exists a permutation $\pi$ such that for any $i \in [1,n]$ it holds $\hat{x}_i = x_{\pi(i)}$.

A deterministic mechanism is a function $\mathcal{M}: \mathbb R^{2n} \mapsto \mathbb R^{n+1}$ mapping a location profile $\mathbf{x}$ and  a preference profile $\mathbf{t}$ to a permuted location profile and a point in $\mathbb R$ which is the location of the facility. For the purpose of easy presentation, we sometimes denote the location of the facility by $\mathcal{M}^f(\mathbf{x}, \mathbf{t})$ and the permuted location profile by  $\mathcal{M}(\mathbf{x}, \mathbf{t})$. We also denote the location assigned to agent $i$ by $\mathcal{M}_i(\mathbf{x}, \mathbf{t})$.
Intuitively speaking, a mechanism builds the facility at point $\mathcal{M}^f(\mathbf{x}, \mathbf{t})$ and relocates agent $i$ to location $\mathcal{M}_i(\mathbf{x}, \mathbf{t})$. 
Given the location $\hat{x}_i$ assigned to agent $i$ and the location $y \in \mathbb R$ of the facility,  
the utility of agent $i$ is 
\begin{align*}
\mathrm{u}(y, \hat{x}_i, t_i) =
\begin{cases}
d - |\hat{x}_i-y| & \text{if}\ t_i = L \\
|\hat{x}_i - y| & \text{if}\ t_i = H
\end{cases}
\end{align*}

Given a permuted location profile $\hat{\mathbf{x}}$ and a location of the facility $y \in \mathbb R$, the social welfare is \
\[
\mathrm{SW}(y,\hat{\mathbf{x}},\mathbf{t}) = \sum_{i \in [1,n]}\mathrm{u}(y, \hat{x}_i, t_i)
\]

A deterministic mechanism is \emph{truthful} if no agent would benefit by misreporting her preference regardless the locations and preferences declared by other agents. That is, for every possible location profile $\mathbf{x}$, every possible preference profile $\mathbf{t}$, every agent $i \in [1,n]$ and every $t'_i \in \{L,H\}$, it holds that $u(\mathcal{M}^f(\mathbf{x}, \mathbf{t}), \mathcal{M}_i(\mathbf{x}, \mathbf{t}), t_i) \geq u(\mathcal{M}^f(\mathbf{x}, <\mathbf{t}_{-i},t'_i>), \mathcal{M}_i(\mathbf{x}, <\mathbf{t}_{-i},t'_i>), t'_i)$, where $\mathbf{t}_{-i} = <t_1, \ldots, t_{i-1}, t_{i+1},\ldots, t_n>$.

We are interested in truthful mechanisms that perform well with respect to the optimal social welfare. The performance ratio $r$ of a mechanism is given by comparing the social welfare it achieves with the optimal social welfare on any location profile $\mathbf{x}$ and any preference profile $\mathbf{t}$ as follows. 
\[
r = \max_{y,\mathbf{x}, \hat{\mathbf{x}}, \mathbf{t}} \frac{\mathrm{SW}(y, \hat{\mathbf{x}},\mathbf{t})}{\mathrm{SW}(\mathcal{M}^f(\mathbf{x}, \mathbf{t}),\mathcal{M}(\mathbf{x}, \mathbf{t}),\mathbf{t}) }
\]
where $\hat{\mathbf{x}}$ is permuted to $\mathbf{x}$. 

\subsection{Optimal Mechanism}
We first present a customized location permutation (see Algorithm~\ref{LocationPermuation}), which returns a location profile that is permuted to $\hat{\mathbf{x}}$ given location profiles $\mathbf{x}$ and $\hat{\mathbf{x}}$, 
a preference profile $\mathbf{t}$ and a facility location $y$. 

\begin{algorithm}
	Let $\mathbf{L}$ and $\mathbf{H}$ denote the set of agents whose preferences are $L$ and $H$, respectively\;
	Let $\hat{\mathbf{x}}_{\mathbf{L}} $ and $\hat{\mathbf{x}}_{\mathbf{H}} $ denote the locations of agents in $\mathbf{L}$ and $\mathbf{H}$ in $\hat{\mathbf{x}}$, respectively\; 
	Sort agents in $L$ by their distances to $y$ in $\mathbf{x}$ in an increasing order. Similarly, sort agents in $H$ by their distances to $y$ in $\mathbf{x}$ in a decreasing order\;
	Sort locations in $\hat{\mathbf{x}}_{\mathbf{L}} $ by the distances to $y$ in an increasing order. Similarly, sort locations in $\hat{\mathbf{x}}_{\mathbf{H}} $ by the distances to $y$ in a decreasing order\;
	Assign agents in $L$ sequentially to locations in $\hat{\mathbf{x}}_{\mathbf{L}} $ based on their orders, and similarly assign agents in $H$ sequentially to locations in $\hat{\mathbf{x}}_{\mathbf{H}} $\;
	Return the assignment.
	\caption{LocationPermuation($\mathbf{x}, \hat{\mathbf{x}}, \mathbf{t},y$)}
	\label{LocationPermuation}	
\end{algorithm}
Essentially, Algorithm~\ref{LocationPermuation} assigns the locations $\hat{\mathbf{x}}_{\mathbf{L}}$(or $\hat{\mathbf{x}}_{\mathbf{H}}$) to agents in $\mathbf{L}$(or $\mathbf{H}$) according to their distances to the facility in location profile $\mathbf{x}$. Now, we present the following mechanism:
\begin{mechanism}{$\mathcal{OPT}$}\label{optMechanism}
	Given a location profile $\mathbf{x}$ and a preference profile $\mathbf{t}$, choose a permuted locations $\mathbf{z}$ and $y$ such that $\mathrm{SW}(y,\mathbf{z},\mathbf{t})$ is maximized, tie-breaking arbitrarily.  The mechanism returns $y$ as the location of the facility $\mathcal{OPT}^f(\mathbf{x}, \mathbf{t})$, and call $\mathrm{LocationPermuation(\mathbf{x}, \mathbf{z}, \mathbf{t}, y)}$ to compute the locations for agents $\mathcal{OPT}(\mathbf{x}, \mathbf{t})$. 
\end{mechanism}

Note that $z$ and $y$ may not be unique. We assume that Mechanism~\ref{optMechanism}  uses an arbitrary tie-breaking rule. The tie-breaking rule has no effect on the performance of the mechanism as it always maximizes the social welfare. As we will show that, if agents could benefit by lying their preferences in Mechanism~\ref{optMechanism}, then the social welfare is not maximized. Therefore, any arbitrary tie-breaking rule works for the truthfulness. 

\paragraph{Examples of Algorithm~\ref{LocationPermuation}.} Now we would like to illustrate Mechanism~\ref{optMechanism} using the instance in Figure~\ref{fig:exp1}. Readers could verify that $\mathrm{SW}(y,\mathbf{z},\mathbf{t})$ is maximized when $y = 8$ and $\mathbf{z}=\{z_1 = 5, z_2 = 1, z_3=8, z_4 = 0, z_5 = 7 \}$.  Mechanism~\ref{optMechanism} locates the facility at $8$. Now let us see the locations assigned to agents. As agent $5$ is close to the location of facility among agents with preference $L$, Mechanism~\ref{optMechanism} assigns agent $5$ to location $8$. As agent $3$ is the second closest to the location of facility after agent $5$, Mechanism~\ref{optMechanism} assigns agent $3$ to location $7$. Next, Mechanism~\ref{optMechanism} assigns agent $1$ to location $5$. Similarly,  for agents with preference $H$, Mechanism~\ref{optMechanism} assigns agent $4$ to location $1$ and agent $2$ to location $0$.

\paragraph{Computing $\mathbf{z}$ in~\ref{optMechanism}.} Mechanism~\ref{optMechanism} first chooses the location of the facility $y$ and a location profile $\mathbf{z}$ that is permuted to $\mathbf{x}$ such that the social welfare is maximized. 
In the following we show that it is computationally feasible to find a $y$  and $\mathbf{z}$ given $\mathbf{x}$ and $\mathbf{t}$. 

\begin{lemma}\label{lem:consecutive}
	Given $\mathbf{x}$, $\mathbf{t}$, $y$ and $\mathbf{z}$ that is permuted to $\mathbf{x}$, if $\mathrm{SW}(y, \mathbf{z}, \mathbf{t})$ is maximized, then for any $i \in \mathbf{H}$, $z_i \leq \min \mathbf{z}_{\mathbf{L}}$ or $z_i \geq \max \mathbf{z}_{\mathbf{L}}$.
\end{lemma}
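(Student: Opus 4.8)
The plan is to argue by contradiction using a single local exchange (swap) between an offending $H$-agent and an $L$-agent, exploiting convexity of $t\mapsto|t-y|$. First I would rewrite the objective so that the role of the permutation is transparent: since $\mathbf{z}$ ranges only over permutations of the fixed multiset of locations in $\mathbf{x}$, for a fixed facility location $y$ we have
\[
\mathrm{SW}(y,\mathbf{z},\mathbf{t}) = |\mathbf{L}|\cdot d - \sum_{\ell \in \mathbf{L}} |z_\ell - y| + \sum_{h \in \mathbf{H}} |z_h - y|,
\]
so maximizing $\mathrm{SW}$ amounts to deciding which locations go to $L$-agents (we want these collectively close to $y$) and which go to $H$-agents (we want these collectively far from $y$).

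Next, suppose for contradiction that $\mathbf{z}$ maximizes $\mathrm{SW}(y,\mathbf{z},\mathbf{t})$ but some $i\in\mathbf{H}$ violates the conclusion, i.e. $\min\mathbf{z}_{\mathbf{L}} < z_i < \max\mathbf{z}_{\mathbf{L}}$. (If $\mathbf{z}_{\mathbf{L}}$ is empty, a singleton, or constant, the statement is vacuous, so I may assume $p:=\min\mathbf{z}_{\mathbf{L}} < q:=\max\mathbf{z}_{\mathbf{L}}$.) Fix $j,k\in\mathbf{L}$ with $z_j=p$, $z_k=q$. Since $t\mapsto|t-y|$ is convex, on $[p,q]$ it attains its maximum at an endpoint, and because $z_i$ lies \emph{strictly} inside $(p,q)$ this yields the strict inequality $|z_i-y| < \max\{|p-y|,|q-y|\}$. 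Let $r\in\{p,q\}$ attain that maximum, with corresponding $L$-agent $m\in\{j,k\}$ ($z_m=r$). Form $\mathbf{z}'$ from $\mathbf{z}$ by swapping the locations assigned to agents $i$ and $m$; this is again a permutation of $\mathbf{x}$, hence an admissible competitor. Agent $i$ (type $H$) moves from distance $|z_i-y|$ to $|r-y|>|z_i-y|$, gaining; agent $m$ (type $L$) moves from distance $|r-y|$ to $|z_i-y|<|r-y|$, also gaining; all other agents are unaffected. Thus $\mathrm{SW}(y,\mathbf{z}',\mathbf{t}) = \mathrm{SW}(y,\mathbf{z},\mathbf{t}) + 2\big(|r-y|-|z_i-y|\big) > \mathrm{SW}(y,\mathbf{z},\mathbf{t})$, contradicting optimality.

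The argument is short and I do not foresee a genuine obstacle; the two points requiring care are (i) that the convexity inequality must be \emph{strict}, which is exactly what negating the conclusion guarantees (so $z_i$ is strictly between $p$ and $q$), and (ii) the degenerate cases in which $\mathbf{z}_{\mathbf{L}}$ has at most one distinct value, where the claim holds trivially. I would also note in passing that iterating this swap idea gives the stronger fact that in an optimal $\mathbf{z}$ the $L$-agents receive precisely the $|\mathbf{L}|$ locations nearest $y$ (and the $H$-agents the $|\mathbf{H}|$ farthest), which is the convenient form for later computing $\mathbf{z}$ and $y$; but the weaker ``no $H$-agent strictly between two $L$-agents'' statement in Lemma~\ref{lem:consecutive} is all that is invoked afterwards.
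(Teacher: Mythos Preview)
The paper states Lemma~\ref{lem:consecutive} without proof, so there is nothing to compare against directly; I can only assess your argument on its own merits and against the one-line intuition the paper gives (``agents whose preferences are $L$ are assigned to consecutive locations where social welfare is maximized'').

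Your swap argument is correct and is exactly the natural proof. The rewriting of $\mathrm{SW}$ is accurate, the contradiction setup is right, and the key step --- that $p<z_i<q$ forces $|z_i-y|<\max\{|p-y|,|q-y|\}$ strictly --- holds because $t\mapsto|t-y|$ is nowhere constant on a nontrivial interval, so a strict interior point cannot attain the endpoint maximum. If you want to make that strictness bulletproof without invoking ``convexity with strict interior'', a one-line case split (e.g., $z_i\ge y$ gives $|z_i-y|=z_i-y<q-y\le\max$, and $z_i<y$ gives $|z_i-y|=y-z_i<y-p\le\max$) does the job and avoids any quibble about affine pieces of a convex function. Your handling of the degenerate cases ($\mathbf{z}_{\mathbf{L}}$ empty or with a single distinct value) is also correct: the claimed disjunction is vacuous there.

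Your closing remark --- that iterating the swap yields the stronger statement that the $L$-agents receive precisely the $|\mathbf{L}|$ locations closest to $y$ --- is a useful observation and is in fact what the paper implicitly uses when it says the search over $(\mathbf{z},y)$ reduces to $O(n^2)$ possibilities; you might promote that from an aside to a short corollary, since it is the form actually needed to justify the enumeration.
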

Intuitively, Lemma~\ref{lem:consecutive} implies that agents whose preferences are $L$ are assigned to consecutive locations  where social welfare is maximized. This characterization strictly reduces the space of the optimal outcomes. Next we present another characterization on the locations of the facility in the optimal outcomes.
\begin{lemma}\label{lem:optimalLocations}
	Given $\mathbf{x}$, $\mathbf{t}$ and $\mathbf{z}$ that is permuted to $\mathbf{x}$, there exists a $y$ such that $y \in \mathbf{x}\cup \{0,d\}$ and $\mathrm{SW}(y,\mathbf{z}, \mathbf{t})$ is maximized.
\end{lemma}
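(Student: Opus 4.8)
The plan is to hold $\mathbf{z}$ and $\mathbf{t}$ fixed and view $\mathrm{SW}(\cdot,\mathbf{z},\mathbf{t})$ as a function of the single real variable $y$, exploiting its piecewise-linear shape. Writing $\mathrm{SW}(y,\mathbf{z},\mathbf{t}) = |\mathbf{L}|\,d - \sum_{i \in \mathbf{L}}|z_i - y| + \sum_{i \in \mathbf{H}}|z_i - y|$, the first term is a constant, and each remaining summand $|z_i - y|$ is a convex, piecewise-linear function of $y$ whose only breakpoint is $y = z_i$. Hence $y \mapsto \mathrm{SW}(y,\mathbf{z},\mathbf{t})$ is continuous and piecewise-linear on the feasible segment $[0,d]$, with all of its breakpoints contained in $\{z_1,\ldots,z_n\}$.

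Next I would partition $[0,d]$ at the sorted distinct values among $z_1,\ldots,z_n$ together with the endpoints $0$ and $d$, obtaining finitely many closed subintervals on each of which $\mathrm{SW}(\cdot,\mathbf{z},\mathbf{t})$ is affine. An affine function on a closed bounded interval attains its maximum at one of the two endpoints, so the maximum of $\mathrm{SW}(\cdot,\mathbf{z},\mathbf{t})$ over $[0,d]$ — which exists by continuity and compactness — is attained at one of the partition points, i.e. at some $z_i$ or at $0$ or $d$. Finally, since $\mathbf{z}$ is permuted to $\mathbf{x}$, the multiset $\{z_1,\ldots,z_n\}$ coincides with $\{x_1,\ldots,x_n\}$, so the maximizer lies in $\mathbf{x}\cup\{0,d\}$, which is the claim.

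There is no deep obstacle here: the statement is essentially the familiar fact that a \emph{signed} weighted $1$-median on a line can be taken at a data point or a boundary point. The two places that need a little care are (i) making explicit that $y$ ranges over the bounded interval $[0,d]$ rather than all of $\mathbb{R}$ — otherwise, when $|\mathbf{H}| > |\mathbf{L}|$, the welfare is unbounded above and no maximizer exists — so that compactness actually delivers an attained maximum; and (ii) the bookkeeping through the permutation that lets us replace the breakpoints $\{z_i\}$ by the original locations $\{x_i\}$ in the conclusion. Together with Lemma~\ref{lem:consecutive}, this lemma restricts the search for an optimal outcome in Mechanism~\ref{optMechanism} to finitely many candidate facility locations, which is the intended consequence.
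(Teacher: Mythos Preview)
Your argument is correct: writing $\mathrm{SW}(y,\mathbf{z},\mathbf{t}) = |\mathbf{L}|\,d - \sum_{i\in\mathbf{L}}|z_i-y| + \sum_{i\in\mathbf{H}}|z_i-y|$ and observing that this is continuous and piecewise-linear in $y$ on the compact interval $[0,d]$, with breakpoints exactly at the $z_i$'s (equivalently the $x_i$'s via the permutation), yields the claim immediately. The paper itself omits the proof of this lemma, but your piecewise-linear/endpoint argument is the standard one and lines up with the paper's stated purpose of reducing the search space to $O(n^2)$ candidates; your remarks about needing $y\in[0,d]$ for attainment and about transferring breakpoints from $\mathbf{z}$ to $\mathbf{x}$ are the only subtleties, and you handle both.
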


By Lemma~\ref{lem:consecutive} and~\ref{lem:optimalLocations}, we limit the search space for the optimal outcomes to $O(n^2)$. The optimal outcomes can be found by enumerating all possible outcomes and returning the outcome with the maximum social welfare. If ties exist, Mechanism~\ref{optMechanism} breaks ties arbitrarily.

\begin{lemma}
	\label{lem:truthfulness}
	Mechanism~\ref{optMechanism} is truthful.
\end{lemma}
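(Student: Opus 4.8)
The plan is to show that misreporting can never strictly help an agent by arguing that any beneficial deviation would contradict the optimality of the social welfare achieved by $\mathcal{OPT}$. Fix an agent $i$, the declarations $\mathbf{t}_{-i}$ of the others, the true type $t_i$, and a possible misreport $t'_i \neq t_i$. Write $\mathbf{t} = \langle \mathbf{t}_{-i}, t_i\rangle$ and $\mathbf{t}' = \langle \mathbf{t}_{-i}, t'_i\rangle$. Let $(y,\mathbf{z})$ be the facility location and permuted profile selected by $\mathcal{OPT}$ on input $\mathbf{t}$, and $(y',\mathbf{z}')$ the ones selected on input $\mathbf{t}'$; the locations actually assigned to the agents are then produced by $\mathrm{LocationPermuation}$. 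Suppose for contradiction that agent $i$ strictly gains, i.e.\ $\mathrm{u}(y', \mathcal{OPT}_i(\mathbf{x},\mathbf{t}'), t_i) > \mathrm{u}(y, \mathcal{OPT}_i(\mathbf{x},\mathbf{t}), t_i)$, where crucially the utility is measured with the \emph{true} type $t_i$.

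The first key step is to understand the location $\mathcal{OPT}_i(\mathbf{x},\mathbf{t}')$ that $i$ receives after the deviation. Under report $t'_i$, agent $i$ is placed by $\mathrm{LocationPermuation}$ among the block of agents declaring $t'_i$, sorted by distance to $y'$; so $i$ gets some location in $\mathbf{z}'_{\mathbf{L}}$ or $\mathbf{z}'_{\mathbf{H}}$ according to $t'_i$. The second key step is the swap argument: starting from the outcome $(y',\mathbf{z}')$ but now re-evaluating social welfare under the \emph{true} profile $\mathbf{t}$, I want to build a feasible permuted profile whose welfare under $\mathbf{t}$ is at least $\mathrm{SW}(y',\mathbf{z}',\mathbf{t}') $-ish but strictly exceeds $\mathrm{SW}(y,\mathbf{z},\mathbf{t})$, contradicting that $(y,\mathbf{z})$ was optimal for $\mathbf{t}$. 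Concretely: the only term that changes between $\mathrm{SW}(\cdot,\cdot,\mathbf{t})$ and $\mathrm{SW}(\cdot,\cdot,\mathbf{t}')$ is agent $i$'s. If $i$'s assigned location under $(y',\mathbf{z}')$ already happens to be good for her true type $t_i$ — which the assumed strict gain says it is, relative to what she got truthfully — then $(y',\mathbf{z}')$ viewed as an outcome for the true profile $\mathbf{t}$ gives $i$ utility strictly larger than $\mathrm{u}(y,\mathcal{OPT}_i(\mathbf{x},\mathbf{t}),t_i)$; but every other agent $j \neq i$ gets exactly the utility they contribute in $\mathrm{SW}(y',\mathbf{z}',\mathbf{t}')$, and since $(y',\mathbf{z}')$ was welfare-optimal for $\mathbf{t}'$, a re-sorting/re-permutation (again via $\mathrm{LocationPermuation}$ with the true $\mathbf{t}$) can only help. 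Summing, $\mathrm{SW}(y',\mathbf{z}'',\mathbf{t}) > \mathrm{SW}(y,\mathbf{z},\mathbf{t})$ for the re-permuted $\mathbf{z}''$, contradicting optimality of $(y,\mathbf{z})$ for $\mathbf{t}$.

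The main obstacle is handling the interaction with $\mathrm{LocationPermuation}$ carefully: when $i$ switches her declared type, she moves from the $H$-block to the $L$-block (or vice versa), which can shift \emph{every} other agent's assigned location within its block, so the ``only $i$'s term changes'' intuition must be justified at the level of the \emph{multiset} of distances rather than agent-by-agent. The clean way around this is to work with $\mathrm{SW}$ as a function of the multiset partition $(\mathbf{z}_{\mathbf{L}}, \mathbf{z}_{\mathbf{H}})$ of locations and the counts $|\mathbf{L}|,|\mathbf{H}|$, noting that $\mathrm{LocationPermuation}$ realizes the welfare-maximizing assignment for any fixed such partition, and then comparing the partition induced by $\mathbf{t}'$ (with $i$ moved) against the optimal partition for $\mathbf{t}$. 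A secondary subtlety is the tie-breaking: since the argument only ever concludes a \emph{strict} welfare improvement would be forced, any fixed tie-breaking rule is harmless, as the paragraph preceding the lemma already notes. I would also invoke Lemmas~\ref{lem:consecutive} and~\ref{lem:optimalLocations} to keep the relevant outcome space finite and to normalize the structure of $(\mathbf{z}_{\mathbf{L}},\mathbf{z}_{\mathbf{H}})$ (consecutive blocks), which makes the swap construction concrete.
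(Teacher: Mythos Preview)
Your high-level plan --- assume a strictly profitable deviation and derive a contradiction with welfare-optimality of $(y,\mathbf{z})$ for the true profile --- is exactly the paper's route. But two concrete ingredients are missing, and without them the argument does not close.

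\textbf{The key algebraic identity is absent.} The paper's proof hinges on the complementarity $u(y,x,L)+u(y,x,H)=d$ for every $y,x$. This is what converts ``$i$'s utility under her \emph{true} type strictly increases'' into ``$i$'s utility under her \emph{reported} type strictly decreases''. Concretely, from $\mathrm{SW}(y',\mathbf{z}',\mathbf{t}')\geq \mathrm{SW}(y,\mathbf{z},\mathbf{t}')$ one subtracts $i$'s term on both sides; the complementarity then yields a \emph{strict} inequality on the sum over $j\neq i$, and adding back $i$'s true-type term gives $\mathrm{SW}(\cdot,\cdot,\mathbf{t})>\mathrm{SW}(y,\mathbf{z},\mathbf{t})$, the contradiction. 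Your substitute, ``a re-sorting/re-permutation via $\mathrm{LocationPermuation}$ can only help'', contributes only a weak inequality $\mathrm{SW}(y',\mathbf{z}'',\mathbf{t})\geq \mathrm{SW}(y',\mathbf{z}',\mathbf{t})$ and does not by itself produce the needed strict comparison with $\mathrm{SW}(y,\mathbf{z},\mathbf{t})$.

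\textbf{The post-mechanism exchange is not handled.} In the central-exchanges model the deviating agent may also trade after the mechanism runs; the paper's proof treats this explicitly as ``instance~$d$'': after misreporting $L$, the true-$H$ agent swaps with the $L$-agent farthest from $y'$. Your contradiction hypothesis compares only $\mathcal{OPT}_i(\mathbf{x},\mathbf{t}')$ with $\mathcal{OPT}_i(\mathbf{x},\mathbf{t})$, i.e.\ the allocation \emph{before} any such swap. The fix is short --- the swap is a permutation among agents of the same declared type, so $\mathrm{SW}$ under $\mathbf{t}'$ is unchanged --- but it has to be said, and the complementarity step must then be applied at the post-swap location $\bar z'_i$ rather than at $z'_i$.

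Two smaller points: Lemmas~\ref{lem:consecutive} and~\ref{lem:optimalLocations} are used only for computational feasibility and play no role in the truthfulness argument; and the ``multiset partition'' reformulation you propose, while correct in spirit, is not what carries the proof --- the complementarity identity does.
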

\begin{proof}
	We show that, given any $\mathbf{x}, \mathbf{t}$, any agent $i \in H$ cannot increase her utility by declaring $t_i = L$. The proof for the opposite case that any agent $i \in L$ cannot increase her utility by declaring $t_i = H$ is symmetric and omitted due to the space limit. 
	
	
	\begin{figure}[h]%
		\centering 
		\rulesep
		\subfigure[The outcome when agent $a_i$ reports her preference ``H'' truthfully in $\mathcal{M}_1$]{
			\begin{tikzpicture}[dot/.style={circle,inner sep=1pt,fill,label={#1},name={#1}},
			extended line/.style={shorten >=-#1,shorten <=-#1},
			extended line/.default=1cm]
			\label{fig:truthfulProof:a}
			\node [dot=$H$] at (0,0) {};
			
			\node[draw=none] at (0.8,0) {$\ldots$};
			
			\node [dot=$H$] at (1.6,0) {};
			\node[draw=none] at (1.6,-0.3) {$z_i$};
			
			\node[draw=none] at (2.4,0) {$\ldots$};
			
			\node [dot=$H$] at (3.2,0) {};
			
			\node [dot=$L$] at (3.6,0) {};
			
			\node[draw=none] at (4.4,0) {$\ldots$};
			
			\node [dot=$L$] at (5.2,0) {};
			\node[draw=none] at (5.2,-0.3) {$y=\mathcal{M}_1^f(\mathbf{x}, \mathbf{t})$};
			
			\node[draw=none] at (6.0,0) {$\ldots$};
			
			\node [dot=$L$] at (6.5,0) {};
			
			\node [dot=$H$] at (7.1,0) {};
			
			\node[draw=none] at (7.5,0) {$\ldots$};
			
			\end{tikzpicture}
		}
		\rulesep	
		\subfigure[A possible outcome when agent $a_i$ reports her preference as  $L$]{
			\begin{tikzpicture}[dot/.style={circle,inner sep=1pt,fill,label={#1},name={#1}},
			extended line/.style={shorten >=-#1,shorten <=-#1},
			extended line/.default=1cm]
			\label{fig:truthfulProof:b}
			\node [dot=$H$] at (0,0) {};
			
			\node[draw=none] at (0.8,0) {$\ldots$};
			
			\node [dot=$H$] at (1.6,0) {};
			\node[draw=none] at (1.6,-0.3) {$\bar{z}_i$};

			\node[draw=none] at (2.4,0) {$\ldots$};
			
			\node [dot=$H$] at (3.2,0) {};

			\node [dot=$L$] at (3.6,0) {};
			
			\node[draw=none] at (4.4,0) {$\ldots$};
			
			\node [dot=$L$] at (5.2,0) {};
			\node[draw=none] at (5.2,-0.3) 
			{$\bar{y}$};
			
			\node[draw=none] at (6.0,0) {$\ldots$};
			
			\node [dot=$L$] at (6.5,0) {};
			
			\node [dot=$H$] at (7.1,0) {};
			
			\node[draw=none] at (7.5,0) {$\ldots$};
			
			\end{tikzpicture}
		}	
		\rulesep
		\subfigure[The outcome the maximize social welfare when agent $a_i$ misreports her preference as $L$ in $\mathcal{M}_1$]{
			\label{fig:truthfulProof:c}
			\begin{tikzpicture}[dot/.style={circle,inner sep=1pt,fill,label={#1},name={#1}},
			extended line/.style={shorten >=-#1,shorten <=-#1},
			extended line/.default=1cm]
			
			\node[draw=none] at (0.0,0) {$\ldots$};
			\node [dot=$H$] at (0.6,0) {};
			
			\node [dot=$L$] at (1,0) {};
			
			\node[draw=none] at (1.6,0) {$\ldots$};
			
			\node [dot=$L$] at (2.2,0) {};
			\node[draw=none] at (2.2,-0.3) {$z'_i$};
			
			\node[draw=none] at (2.8,0) {$\ldots$};
			
			\node [dot=$L$] at (3.8,0) {};
			\node[draw=none] at (3.8,-0.3) {$y'=\mathcal{M}_1^f(\mathbf{x}, \mathbf{t})$};
			
			\node[draw=none] at (4.4,0) {$\ldots$};
			\node[draw=none] at (5.2,0) {$\ldots$};
			\node[draw=none] at (6,0) {$\ldots$};
			
			\node [dot=$L$] at (6.4,0) {};
			
			\node[draw=none] at (6.8,0) {$\ldots$};
			
			\node [dot=$H$] at (7.4,0) {};
			\node[draw=none] at (8,0) {$\ldots$};
			\end{tikzpicture}
		}
		\rulesep
		\subfigure[The outcome after agent $a_i$ misreporting and exchanging his location with other agent]{
			\label{fig:truthfulProof:d}
			\begin{tikzpicture}[dot/.style={circle,inner sep=1pt,fill,label={#1},name={#1}},
			extended line/.style={shorten >=-#1,shorten <=-#1},
			extended line/.default=1cm]
			
			\node[draw=none] at (0.0,0) {$\ldots$};
			\node [dot=$H$] at (0.6,0) {};
			
			\node [dot=$L$] at (1,0) {};
			
			\node[draw=none] at (1.6,0) {$\ldots$};
			
			\node [dot=$L$] at (2.2,0) {};
			
			\node[draw=none] at (2.8,0) {$\ldots$};
			
			\node [dot=$L$] at (3.8,0) {};
			\node[draw=none] at (3.8,-0.3) {$\bar{y}'$}; 
			
			\node[draw=none] at (4.4,0) {$\ldots$};
			\node[draw=none] at (5.2,0) {$\ldots$};
			\node[draw=none] at (6,0) {$\ldots$};
			
			\node [dot=$L$] at (6.4,0) {};
			\node[draw=none] at (6.4,-0.3) {$\bar{z}'_i$};

			\node[draw=none] at (6.8,0) {$\ldots$};
			
			\node [dot=$H$] at (7.4,0) {};
			\node[draw=none] at (8,0) {$\ldots$};
			\end{tikzpicture}
		}
		\rulesep
		%
		%
		%
		%
		%
		%
		%
		%
		%
		%
		%
		%
		\caption{Instances in the proof of Lemma~\ref{lem:truthfulness}}
		\label{fig:truthfulInstances}
	\end{figure}
	
	Before presenting the proof, we introduce the following four instances that will be used in this proof. The instances are shown in Figure~\ref{fig:truthfulInstances},
	\begin{itemize}
		\item Instance $a$: $\mathbf{x}$ and $\mathbf{t}$. Outcome: $<y, \mathbf{z}>$. In this instance, agents report preference truthfully. $\mathcal{M}_1$ places the facility at $y$ and reassigns each agent $j \in A$ to $z_j$. The social welfare is maximized. 
		\item Instance $b$: $\bar{\mathbf{x}}$ and $\bar{\mathbf{t}}$ where $\bar{\mathbf{x}} = \mathbf{x}$ and $\bar{\mathbf{t}} = \mathbf{t}$ except $\bar{t}_i = L$. Outcome: $<\bar{y}, \bar{\mathbf{z}>}$ where $\bar{y} = y$ and $\bar{\mathbf{z}}=\mathbf{z}$. In this instance, agent $i$ misreports her preference.  The facility is placed at $\bar{y}$ and agent $j \in A$ is assigned to $\bar{z}_j$.
		\item Instance $c$: $\mathbf{x}'$ and $\mathbf{t}'$ where $\mathbf{x}' = \bar{x}$ and $\mathbf{t}' = \bar{\mathbf{t}}$. Outcome: $<y', \mathbf{z}'>$. In this instance,  Mechanism~\ref{optMechanism} places the facility at $y'$ and assigns each agent $j \in A$ to $z'_j$. The social welfare is maximized.
		\item Instance $d$: $\bar{\mathbf{x}}'$ and $\bar{\mathbf{t}}'$ where $\bar{\mathbf{x}}' = x$ and $\bar{\mathbf{t}}' = \mathbf{t}$. Outcome: $<\bar{y}', \bar{\mathbf{z}}'>$. This instance is essentially the exchange phase, where agent $i$ exchanges his location in $\mathbf{z}'$ with 
		agent $j$ who is furthest to $y'$ with preference $L$ in $\mathbf{z}'$.
	\end{itemize}

	From these instances , we get the following:
	\begin{align}
	&\sum_{j \in H \setminus \{ i\}} u(\bar{y}, \bar{z}_j, \bar{t}_i) + \sum_{j \in L \setminus \{ i\}} u(\bar{y}, \bar{z}_j, \bar{t}_i) + u(\bar{y}, \bar{z}_i,  L) \nonumber \\
	\leq & \sum_{j \in H \setminus \{ i\}} u(y', z'_j, t'_i) + \sum_{j \in L \setminus \{ i\}} u(y', z'_j, t'_i) + u(y', z'_i, L)   \nonumber \\
	= & \sum_{j \in H \setminus \{ i\}} u(\bar{y}', \bar{z}'_j, \bar{t}'_i) + \sum_{j \in L \setminus \{ i\}} u(\bar{y}', \bar{z}'_j, \bar{t}'_i) + u(\bar{y}', \bar{z}'_i, L)  \label{eq:SumSocialWelfare}
	\end{align}
	
	The first inequality comes from that $y'$ and $\mathbf{z}'$ maximizes the social welfare when there are $|L|+1$ agents with preference $L$. \footnote{In instances $b$ and $c$ the preference of agent $i$ is $L$} 
	Now let us assume that agent $i$ could benefit by misreporting her preference as $L$ and then exchanging her location with other agent. As
	$ u(\bar{y},\bar{z_i},H)  = u(y,z_i,H)$ by definition, it implies that $u(\bar{y}',\bar{z}'_i,H) > u(\bar{y},\bar{z_i},H)$. Since the relation of two types of preferences are opposite, we know that that $u(\bar{y}',\bar{z}'_i,L) < u(\bar{y},\bar{z}_i,L)$. Substitute it into Equation~(\ref{eq:SumSocialWelfare}), we get 
	\begin{align}
	&\sum_{j \in H \setminus \{ i\}} u(\bar{y}, \bar{z}_j, \bar{t}_i) + \sum_{j \in L \setminus \{ i\}} u(\bar{y}, \bar{z}_j, \bar{t}_i)  \nonumber \\
	< & \sum_{j \in H \setminus \{ i\}} u(\bar{y}', \bar{z}'_j, \bar{t}'_i) + \sum_{j \in L \setminus\{ i\}} u(\bar{y}', \bar{z}'_j, \bar{t}'_i) 
	\end{align}
	Hence, we have
	\begin{align*}
	&\sum_{j \in H \setminus \{ i\}} u(y, z_j, t_i) + \sum_{j \in L \setminus \{ i\}} u(y, z_j, t_i)  + u(y,z_i,H) \\
	=&\sum_{j \in H \setminus \{ i\}} u(\bar{y}, \bar{z}_j, \bar{t}_i) + \sum_{j \in L \setminus \{ i\}} u(\bar{y}, \bar{z}_j, \bar{t}_i)  + u(\bar{y},\bar{z_i},H) \nonumber \\
	< & \sum_{j \in H \setminus \{ i\}} u(\bar{y}', \bar{z}'_j, \bar{t}'_i) + \sum_{j \in L \setminus \{ i\}} u(\bar{y}', \bar{z}'_j, \bar{t}'_i) + u(\bar{y}',\bar{z}'_i,H)
	\end{align*}
	It contradicts the fact that  Mechanism \ref{optMechanism} maximizes the social welfare. 
\end{proof}

\section{Individual Exchanges}
\label{sec:individualExchange}
In this section, we explore the truthful mechanisms when social planners are less powerful. In particular, exchanges between agents are coordinated by agents themselves rather than mechanisms. We assume that the TTC algorithm simulates the exchanges between agents. The problem is formalized as follows. 

\subsection{Model}
The goal is to locate a facility in a line segment $[0,d]$ given a  location profile $\mathbf{x}$ and a preference profile $\mathbf{t}$, which are defined as previous section. 

A deterministic mechanism is a function $\mathcal{M}: \mathbb{R}^{2n} \rightarrow \mathbf{R}$ which is the location of the facility. A randomized mechanism is a function $\mathcal{M}: \mathbb{R}^{2n} \rightarrow \Delta(\mathbf{R})$ where $\Delta(\mathbf{R})$ is the set of distribution over $\mathbf{R}$. Given any location profile $\mathbf{x}$ and preference profile $\mathbf{t}$, let $\mathcal{M}(\mathbf{x}, \mathbf{t})$ denote the location of facility in mechanism $\mathcal{M}$.  $\mathcal{M}(\mathbf{x}, \mathbf{t})$ is a random variable in a randomized mechanism. 

Given the location of the facility $y$, the location profile $\mathbf{x}$ and the preference profile $\mathbf{t}$, let $\mathcal{TTC}(\mathbf{x},\mathbf{t},y)$ denote the outcome of the TTC algorithm. In particular, let $\mathcal{TTC}_i(\mathbf{x},\mathbf{t},y)$ be the location of agent $i$ from the TTC algorithm. Hence, the utility of agent $i$ is 
\begin{align*}
\Scale[0.92]{
	\mathrm{u}(y, \mathcal{TTC}_i(\mathbf{x},\mathbf{t},y), t_i) =
	\begin{cases}
	d - |\mathcal{TTC}_i(\mathbf{x},\mathbf{t},y)-y| & \text{if}\ t_i = L \\
	|\mathcal{TTC}_i(\mathbf{x},\mathbf{t},y) - y| & \text{if}\ t_i = H
	\end{cases}
}
\end{align*}
Similarly, the social welfare obtained by TTC is 
\[
\mathrm{SW}(y,\mathcal{TTC}(\mathbf{x},\mathbf{t},y),\mathbf{t}) = \sum_{i \in [1,n]}\mathrm{u}(y, \mathcal{TTC}_i(\mathbf{x},\mathbf{t},y), t_i)
\]

A deterministic mechanism $\mathcal{M}$ is \textit{truthful} if no agent would benefit by misreporting her preference in the mechanism and then obtaining a better outcome from the TTC algorithm regardless of the locations and preferences of other agents. That is, for any location profile $\mathbf{x}$, any preference profile $\mathbf{t}$, any $i \in [1,n]$ and any $t'_i \in \{L, H\}$, it holds $u(\mathcal{M}(\mathbf{x},\mathbf{t}), \mathcal{TTC}_i(\mathbf{x}, \mathbf{t}, \mathcal{M}(\mathbf{x},\mathbf{t})),t_i) \geq        
u(\mathcal{M}(\mathbf{x},<\mathbf{t}_{-i},t'_i>), \mathcal{TTC}_i(\mathbf{x}, <\mathbf{t}_{-i},t_i>, \mathcal{M}(\mathbf{x},<\mathbf{t}_{-i},t'_i>)),t_i)$. Note that the preference $t'_i$ misreported by agent $i$ only affects the location of the facility chosen by mechanism $\mathcal{M}$ only. Since the exchanges happened in the TTC algorithm are carried out by agents themselves, the true preference $t_i$ is used there. A universally truthful mechanism is a probability distribution over deterministic truthful mechanisms

As the previous section, we measure the performance of the truthful mechanism with its approximation ratio to the optimal social welfare.  Given mechanism $\mathcal{M}$, its approximation ratio is defined as follows.  
\[
r^{\mathcal{M}} = \max_{y,\mathbf{x}, \hat{\mathbf{x}}, \mathbf{t}} \frac{\mathrm{SW}(y, \hat{\mathbf{x}},\mathbf{t})}{ \mathrm{SW}(\mathcal{M}(\mathbf{x}, \mathbf{t}), \mathcal{TTC}(\mathbf{x}, \mathcal{M}(\mathbf{x}, \mathbf{t}),\mathbf{t}), \mathbf{t})  }
\]
where $\hat{\mathbf{x}}$ is permuted to $\mathbf{x}$.
\subsection{Optimal location}
From the previous section, we know that, given the location of the facility, agents with the same preference ($L$ or $H$) will be consecutive after their exchanges in the TTC algorithm. Hence, it is easy to verify that, assuming agents report their preference truthfully, one could locate the facility at the optimal location $y$ so that the optimal social welfare will be achieved after the exchanges between agents. Hence, the following question arises naturally: \textit{Is it a truthful mechanism by locating the facility at the optimal location conditioned on that agents would exchange their locations by TTC algorithm?} Note that in the example provided in Figure~\ref{fig:exp1} the facility is located at the optimal location assuming that agents \textit{would not} exchange their locations. 

\begin{theorem}
	\label{thm:optimalLocation}
	Given $\mathbf{x}$ and $\mathbf{t}$, $\mathcal{M}(\mathbf{x},\mathbf{t}) = \argmax_{y \in [0,d]} \mathrm{SW}(y, \hat{\mathbf{x}},\mathbf{t})$ is not truthful given that agents would exchange their locations by the TTC algorithm. 
\end{theorem}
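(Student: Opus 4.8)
The plan is to refute truthfulness with a single counterexample, and the instance of Figure~\ref{fig:exp1} already suffices. Work on $[0,d]$ with $d=8$, five agents at $\mathbf{x}=\langle 0,1,5,7,8\rangle$, true preferences $\mathbf{t}=\langle L,H,L,H,L\rangle$; I claim agent $a_1$ (type $L$, located at $0$) strictly gains by declaring $H$. The argument needs two facility locations---the one $\mathcal{M}$ chooses on $\mathbf{t}$ and the one it chooses on $\langle H,H,L,H,L\rangle$---and then the $\mathcal{TTC}$ outcome of each, the exchange phase always being run against the true profile $\mathbf{t}$ since agents trade with their true preferences.

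The first step is to put the post-exchange welfare of a facility location in closed form. For any $y$ whose five distances to the agents are distinct, the induced housing market has strict preferences, so $\mathcal{TTC}$ outputs the unique core allocation; and any allocation putting an $L$-agent farther from $y$ than an $H$-agent is blocked by that pair swapping, so the core allocation assigns the $L$-agents to the $|L|$ positions nearest $y$ and the $H$-agents to the $|H|$ farthest ones (the same-preference-agents-end-up-consecutive fact recalled just before the theorem; cf.\ Lemma~\ref{lem:consecutive}). Hence the post-exchange welfare at $y$ equals $|L|\,d$, minus the sum of the $|L|$ smallest distances to $y$, plus the sum of the $|H|$ largest distances to $y$. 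By Lemma~\ref{lem:optimalLocations} (applied to the welfare-maximizing permutation), the maximum of this quantity over $[0,d]$ is attained inside $\mathbf{x}\cup\{0,d\}=\{0,1,5,7,8\}$, so it is enough to evaluate it there. On $\mathbf{t}$ ($|L|=3$) it is maximized only at $y=8$, hence $\mathcal{M}(\mathbf{x},\mathbf{t})=8$; running $\mathcal{TTC}$ with facility $8$ and profile $\mathbf{t}$ sends $a_1$ to location $7$, for utility $d-|7-8|=7$. On $\langle H,H,L,H,L\rangle$ ($|L|=2$) the same expression is maximized only at $y=0$, hence $\mathcal{M}(\mathbf{x},\langle H,H,L,H,L\rangle)=0$; running $\mathcal{TTC}$ with facility $0$ and the \emph{true} profile $\mathbf{t}$ leaves $a_1$ at location $0$, for utility $d-|0-0|=8$. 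Since $8>7$, $a_1$ profits from the lie, so $\mathcal{M}$ is not truthful.

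The one place that demands care is the two $\mathcal{TTC}$ traces: one must follow, round by round, which agent points to which owner and which cycle clears first, because $a_1$'s final position---and hence the whole conclusion---hinges on it. The reduction of the facility search to five candidates (Lemma~\ref{lem:optimalLocations}), the uniqueness of the maximizer among them (a finite check, which also rules out the midpoint breakpoints of the piecewise-linear welfare), and the nearest/farthest description of the core allocation are all routine.
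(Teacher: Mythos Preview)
Your proof is correct and follows the same strategy as the paper---refuting truthfulness via a single five-agent counterexample on $[0,8]$---though you use the instance of Figure~\ref{fig:exp1} (with the $L$-agent $a_1$ lying as $H$) rather than the paper's own instance in Figure~\ref{fig:exp2} (locations $0,1,5,6.5,8$, preferences $L,L,H,H,H$, where the $H$-agent $a_4$ lies as $L$). Both instances work; your write-up is more explicit about the methodology (closed-form post-exchange welfare, reduction to the finite candidate set $\mathbf{x}\cup\{0,d\}$ via Lemmas~\ref{lem:consecutive} and~\ref{lem:optimalLocations}, and acknowledgment that the $\mathcal{TTC}$ trace must be run carefully), whereas the paper simply asserts the optimal facility locations and the resulting exchanges. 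Your two $\mathcal{TTC}$ traces and the claimed uniqueness of the maximizers both check out.
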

\begin{proof}
	\begin{figure}[h]%
		\centering 
		\subfigure[Agent $a_4$ reports her preference ``H'' truthfully]{
			\label{fig:exp2:a}
			\begin{tikzpicture}[scale=0.95, dot/.style={circle,inner sep=1pt,fill,label={#1},name={#1}},
			extended line/.style={shorten >=-#1,shorten <=-#1},
			extended line/.default=1cm]
			
			\node[draw=none] at (-1.6,-0) {True preferences:};
			
			\node [dot=$L$] at (0,0) {};
			\node[draw=none] at (0,-0.3) {$a_1$};
			\node[draw=none] at (0.35,-0.3) {1};
			
			\node [dot=$L$] at (0.75,0) {};
			\node[draw=none] at (0.75,-0.3) {$a_2$};
			\node[draw=none] at (1.75,-0.3) {4};
			
			\node [dot=$H$] at (3,0) {};
			\node[draw=none] at (3,-0.3) {$a_3$};
			
			\node[draw=none] at (3.5,-0.3) {1.5};
			
			\node [dot=$H$] at (4,0) {};
			\node[draw=none] at (4,-0.3) {$a_4$};
			\node[draw=none] at (4.5,-0.3) {1.5};
			
			\node [dot=$H$] at (5,0) {};
			\node[draw=none] at (5,-0.3) {$a_5$};
			\draw (0,0) -- (5,0);

			\node[draw=none] at (-1.6,-1.3) {Optimal location:};
			
			\node [dot=$L$] at (0,-1.3) {};
			\node[draw=none] at (0,-1.6) {$\mathcal{M}(\mathbf{x},\mathbf{t})$};
			
			\node [dot=$L$] at (0.75,-1.3) {};
			\node[draw=none] at (1.75,-1.6) {4};
			
			\node [dot=$H$] at (3,-1.3) {};
			
			\node[draw=none] at (3.5,-1.6) {1.5};
			
			\node [dot=$H$] at (4,-1.3) {};
			\node[draw=none] at (4.5,-1.6) {1.5};
			
			\node [dot=$H$] at (5,-1.3) {};
			\draw (0,-1.3) -- (5,-1.3);		
			\end{tikzpicture}}
		\subfigure[Agent $a_4$ misreports her preference as ``L'']{
			\label{fig:exp2:b}
			\begin{tikzpicture}[scale=0.95, dot/.style={circle,inner sep=1pt,fill,label={#1},name={#1}},
			extended line/.style={shorten >=-#1,shorten <=-#1},
			extended line/.default=1cm]
			
			\node[draw=none] at (-1.7,-0) {Declared preferences:};
			
			\node [dot=$L$] at (0,0) {};
			\node[draw=none] at (0,-0.3) {$a_1$};
			\node[draw=none] at (0.35,-0.3) {1};
			
			\node [dot=$L$] at (0.75,0) {};
			\node[draw=none] at (0.75,-0.3) {$a_2$};
			\node[draw=none] at (1.75,-0.3) {4};
			
			\node [dot=$H$] at (3,0) {};
			\node[draw=none] at (3,-0.3) {$a_3$};
			
			\node[draw=none] at (3.5,-0.3) {1.5};
			
			\node [dot=$L$] at (4,0) {};
			\node[draw=none] at (4,-0.3) {$a_4$};
			\node[draw=none] at (4.5,-0.3) {1.5};
			
			\node [dot=$H$] at (5,0) {};
			\node[draw=none] at (5,-0.3) {$a_5$};
			\draw (0,0) -- (5,0);
			
			\node[draw=none] at (-1.7,-1.3) {Optimal location:};
			
			\node [dot=$H$] at (0,-1.3) {};
			\node[draw=none] at (0.35,-1.6) {1};
			
			\node [dot=$H$] at (0.75,-1.3) {};
			\node[draw=none] at (1.75,-1.6) {4};
			
			\node [dot=$L$] at (3,-1.3) {};
			
			\node[draw=none] at (3.5,-1.6) {1.5};
			
			\node [dot=$L$] at (4,-1.3) {};
			
			\node [dot=$L$] at (5,-1.3) {};
			\node[draw=none] at (5,-1.6) {$\mathcal{M}(\mathbf{x},\mathbf{t})$};
			\draw (0,-1.3) -- (5,-1.3);	
			\end{tikzpicture}}
		\caption{Instances in Theorem~\ref{thm:optimalLocation}}
		\label{fig:exp2}
	\end{figure}	
	We prove this theorem by the instances shown in Figure~\ref{fig:exp2} in which five agents are located in a line segment of a length of $8$km.  Figure~\ref{fig:exp2:a} shows the distance between agents and their true preferences. The optimal location of the facility is at $0$km (the location of agent $1$). No exchange would happen in the TTC algorithm as no  pair of agents would prefer exchanging the locations. The utility of agent $4$ is $6.5$. Now considering the case that agent $4$ falsely declares her preference as $L$ shown in Figure~\ref{fig:exp2:b}, the optimal location of the facility is at $8$km. Then in the TTC algorithm, agent $5$ would exchange her location with agent $1$, and agent $4$ would exchange her location with agent $2$. The utility of agent $4$ is $7$. This implies that agent $4$ benefit by manipulating her preference in the mechanism. Then the theorem directly follows.
\end{proof}

\subsection{Simple 2-approximation Mechanism}
Next we provide a simple universally truthful randomized mechanism that achieve at least half of the optimal social welfare in expectation. 
\begin{mechanism}{$\mathcal{M}$}\label{simpleMechanism}
	Given a location profile $\mathbf{x}$ and a preference profile $\mathbf{t}$, $\mathcal{M}^f(\mathbf{x},\mathbf{t}) = 0$ with probability $1/2$, and $\mathcal{M}^f(\mathbf{x},\mathbf{t}) = d$ with probability $1/2$. 
\end{mechanism}

\begin{theorem}
	Mechanism $\mathcal{M}$ is a universally truthful randomized mechanism that achieves $2$-approximation to the optimal social welfare.
\end{theorem}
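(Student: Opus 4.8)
The plan is to verify the two assertions separately; both are short. Mechanism~\ref{simpleMechanism} is, by construction, the uniform mixture of the two deterministic mechanisms $\mathcal{M}_0$ and $\mathcal{M}_d$ that place the facility at $0$ and at $d$ respectively, regardless of the reported profile. For each of these the facility location is a \emph{constant} function of the reports, and the exchange phase (TTC) is run with the agents' true preferences; hence an agent's report changes neither the facility location nor, consequently, the TTC outcome or her final utility, so every agent is exactly indifferent between truth-telling and lying. Thus $\mathcal{M}_0$ and $\mathcal{M}_d$ are deterministic truthful mechanisms, and Mechanism~\ref{simpleMechanism} is universally truthful by definition.

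For the approximation guarantee I would bound both the optimum and the mechanism's expected welfare against the single quantity $nd$. First, for any facility location $y\in[0,d]$ and any permuted profile $\hat{\mathbf{x}}$, an agent of type $L$ has utility $d-|\hat{x}_i-y|\le d$ and an agent of type $H$ has utility $|\hat{x}_i-y|\le d$, so the relocatable optimum satisfies $\mathrm{SW}(y,\hat{\mathbf{x}},\mathbf{t})\le nd$. Second, for the lower bound I would use that the TTC outcome is individually rational, i.e.\ every agent weakly prefers her assigned location to her endowment $x_i$; hence $\mathrm{SW}(y,\mathcal{TTC}(\mathbf{x},\mathbf{t},y),\mathbf{t})\ge \mathrm{SW}(y,\mathbf{x},\mathbf{t})$ for every $y$, in particular for $y\in\{0,d\}$. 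The key arithmetic observation is that $\mathrm{u}(0,x_i,t_i)+\mathrm{u}(d,x_i,t_i)=d$ for every agent $i$ (the two terms are $d-x_i$ and $x_i$ when $t_i=L$, and $x_i$ and $d-x_i$ when $t_i=H$), so $\mathrm{SW}(0,\mathbf{x},\mathbf{t})+\mathrm{SW}(d,\mathbf{x},\mathbf{t})=nd$.

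Combining these, the expected social welfare of Mechanism~\ref{simpleMechanism} is
\[
\tfrac12\,\mathrm{SW}(0,\mathcal{TTC}(\mathbf{x},\mathbf{t},0),\mathbf{t})+\tfrac12\,\mathrm{SW}(d,\mathcal{TTC}(\mathbf{x},\mathbf{t},d),\mathbf{t})\ \ge\ \tfrac12\bigl(\mathrm{SW}(0,\mathbf{x},\mathbf{t})+\mathrm{SW}(d,\mathbf{x},\mathbf{t})\bigr)\ =\ \tfrac{nd}{2},
\]
which is at least $\tfrac12$ of the optimum, yielding the claimed ratio $2$. I expect the only point that needs care — and the one I would flag as the crux — to be invoking individual rationality of TTC correctly: that is precisely where the exchange phase enters, and it is what makes the endowment welfare $\mathrm{SW}(y,\mathbf{x},\mathbf{t})$ a valid lower bound on the post-exchange welfare; the rest is a one-line computation. (One could alternatively identify the TTC outcome explicitly — the $L$ agents end up at the $|\mathbf{L}|$ locations closest to the facility, analogously to Lemma~\ref{lem:consecutive} — and recompute, but this is not needed.)
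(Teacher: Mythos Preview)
Your proof is correct and follows essentially the same route as the paper: the paper also argues truthfulness from the fact that the facility location is independent of reports, bounds the optimum by $nd$, uses the identity $u(0,x_i,t_i)+u(d,x_i,t_i)=d$ to get pre-exchange expected welfare $nd/2$, and then invokes that exchanges only increase welfare (their Lemma~\ref{lem:nondecreasingSocialWelfare}, your individual rationality of TTC) to finish. Your presentation is slightly more explicit on universal truthfulness (decomposing into the two constant deterministic mechanisms), but substantively the arguments coincide.
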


\subsubsection{Universally truthfulness}
It is easy to verify that agents would not benefit from misreporting their preference as Mechanism $\mathcal{M}$ always places the facility at location $0$ with a probability of $1/2$ and places the facility at location $d$ with a probability of $1/2$. 

\subsubsection{Approximation}
The proof of the approximation of Mechanism $\mathcal{M}$ comes from two ingredients. 
\begin{lemma}
	\label{lem:nondecreasingSocialWelfare}
	During exchanging phases, the social welfare increases as agents exchange their locations with each other. 
\end{lemma}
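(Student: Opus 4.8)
The plan is to decompose an entire exchange phase into a sequence of \emph{elementary} exchanges --- a pairwise swap in the unrestricted ``mutual benefit'' model, or a single trading cycle in the TTC algorithm --- and to show that the social welfare does not decrease across any one elementary exchange; iterating then proves the lemma. Fix the facility location $y$, which stays put throughout the exchange phase. Write $\mathbf{x}^{(0)} = \mathbf{x}$ for the location profile handed to the agents, and $\mathbf{x}^{(0)}, \mathbf{x}^{(1)}, \dots, \mathbf{x}^{(m)}$ for the profiles obtained after each successive elementary exchange, so that $\mathbf{x}^{(m)}$ is the final profile (in TTC notation $\mathbf{x}^{(m)} = \mathcal{TTC}(\mathbf{x},\mathbf{t},y)$); each $\mathbf{x}^{(k)}$ is a permutation of $\mathbf{x}$. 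It then suffices to prove $\mathrm{SW}(y, \mathbf{x}^{(k+1)}, \mathbf{t}) \ge \mathrm{SW}(y, \mathbf{x}^{(k)}, \mathbf{t})$ for every $k$, and chain the inequalities.

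For a fixed $k$, let $S$ be the set of agents participating in the $(k+1)$-st elementary exchange. Every agent $j \notin S$ keeps its location, so $\mathrm{u}(y, x_j^{(k+1)}, t_j) = \mathrm{u}(y, x_j^{(k)}, t_j)$. For $j \in S$ the claim is $\mathrm{u}(y, x_j^{(k+1)}, t_j) \ge \mathrm{u}(y, x_j^{(k)}, t_j)$. In the unrestricted model this is immediate from the definition of a mutually beneficial exchange: such an exchange is carried out only when every member of $S$ strictly prefers its new location to its current one. For the TTC model it is exactly the individual rationality of TTC, which one can see directly from the cycle-removal structure: an agent leaves the procedure only as part of a trading cycle, and an object leaves the pool only together with its owner, so at the round agent $j$ is removed its own current location is still in the pool; since $j$ is assigned the location it points to --- its most preferred among those still available --- the new location is at least as good for $j$ as its own. (Equivalently, the TTC outcome is in the core, which is individually rational.) Summing the per-agent (in)equalities over all $n$ agents gives $\mathrm{SW}(y, \mathbf{x}^{(k+1)}, \mathbf{t}) \ge \mathrm{SW}(y, \mathbf{x}^{(k)}, \mathbf{t})$.

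Chaining these for $k = 0, 1, \dots, m-1$ yields $\mathrm{SW}(y, \mathbf{x}^{(m)}, \mathbf{t}) \ge \mathrm{SW}(y, \mathbf{x}^{(0)}, \mathbf{t}) = \mathrm{SW}(y, \mathbf{x}, \mathbf{t})$, as claimed. (Monotonicity alone does not require that the phase be finite, but it is: there are finitely many profiles permuting $\mathbf{x}$, and the procedures we consider make progress at each step --- TTC removes at least one agent per round --- so the phase terminates.) I do not expect a real obstacle here; the only point that needs care is the per-agent inequality for the agents of $S$, and in particular making the TTC case precise via the cycle-removal argument rather than quoting individual rationality as folklore.
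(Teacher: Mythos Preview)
The paper does not actually supply a proof of this lemma; it is asserted and immediately used, the implicit justification being the one-line observation that an exchange is carried out only when every participant is made (weakly) better off, so the sum of utilities cannot drop. Your argument is a correct and careful formalization of exactly that idea: you break the phase into elementary steps, note that non-participants are unaffected, and appeal to the defining ``mutual benefit'' condition (or, for TTC, to individual rationality via the cycle-removal picture) for the participants. There is nothing to correct; if anything, the TTC discussion is more than the paper demands, since the text around the lemma explicitly says it does \emph{not} rely on TTC in the individual-exchange model and only assumes agents trade for mutual benefit.
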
 
This lemma allows us to only consider the social welfare before agents exchanging their locations. In other word, if the social welfare before exchanging phases is close to the optimal social welfare, then the social welfare after exchanging phases is still close to the optimal social welfare. With some simple calculations, we get the following.

\begin{lemma}
	\label{lem:approximation}
	Given a location profile $\mathbf{x}$ and a preference profile $\mathbf{t}$, $\mathrm{SW}(\mathcal{M}^f(\mathbf{x},\mathbf{t}),  \mathbf{x},\mathbf{t}) = \frac{n\cdot d}{2}$.
\end{lemma}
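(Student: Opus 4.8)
The plan is to exploit the symmetry of the mechanism: since Mechanism~\ref{simpleMechanism} places the facility at the two endpoints $0$ and $d$ with equal probability $1/2$, it suffices to compute the expected utility of a single agent and then invoke linearity of expectation. Here $\mathcal{M}^f(\mathbf{x},\mathbf{t})$ is a random variable, so the claimed identity is understood as an equality of $\mathbb{E}\big[\mathrm{SW}(\mathcal{M}^f(\mathbf{x},\mathbf{t}), \mathbf{x},\mathbf{t})\big]$ with $\frac{n\cdot d}{2}$.

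First I would fix an arbitrary agent $i$ located at $x_i \in [0,d]$ and use the elementary fact that $|x_i - 0| + |x_i - d| = d$. If $t_i = L$, the two possible utilities of agent $i$ are $d - |x_i - 0| = d - x_i$ (facility at $0$) and $d - |x_i - d| = x_i$ (facility at $d$), which sum to $d$; if $t_i = H$, the two possible utilities are $|x_i - 0| = x_i$ and $|x_i - d| = d - x_i$, which again sum to $d$. Hence in either case the expected utility of agent $i$ equals $\frac12 (d - x_i) + \frac12 x_i = \frac{d}{2}$, independently of both $x_i$ and $t_i$.

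Summing this over the $n$ agents and applying linearity of expectation yields $\mathbb{E}\big[\mathrm{SW}(\mathcal{M}^f(\mathbf{x},\mathbf{t}), \mathbf{x},\mathbf{t})\big] = n \cdot \frac{d}{2}$, as claimed. I would also note that this is the social welfare \emph{before} any exchanges take place, so that, combined with Lemma~\ref{lem:nondecreasingSocialWelfare}, it lower-bounds the social welfare the mechanism actually achieves after the TTC phase.

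The computation is routine, so there is essentially no genuine obstacle. The only points that need care are (i) making explicit that the stated equality is an expectation over the mechanism's internal coin flip rather than a pointwise identity, and (ii) the short case split on the agent's type, which the cancellation $|x_i - 0| + |x_i - d| = d$ renders immediate.
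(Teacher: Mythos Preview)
Your argument is correct and matches the paper's own proof essentially line for line: both expand the expected social welfare as $\tfrac12\,\mathrm{SW}(0,\mathbf{x},\mathbf{t})+\tfrac12\,\mathrm{SW}(d,\mathbf{x},\mathbf{t})$ and then use the identity $u(0,x_i,t_i)+u(d,x_i,t_i)=d$ for every agent to obtain $\frac{n\cdot d}{2}$. Your explicit case split on $t_i$ and your remark that the equality is in expectation simply make explicit what the paper leaves implicit.
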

\begin{proof}
	As Mechanism $\mathcal{M}$ is randomize mechanism, the expected social welfare it achieves can be written as follows.
	\begin{align*}
	& \mathrm{SW}(\mathcal{M}^f(\mathbf{x},\mathbf{t}),  \mathbf{x},\mathbf{t}) \\
	= & \mathrm{SW}(\mathbb{E}[\mathcal{M}^f(\mathbf{x},\mathbf{t})],\mathbf{x},\mathbf{t} )  \\
	= & \frac{1}{2} \mathrm{SW}(0,  \mathbf{x},\mathbf{t}) + \frac{1}{2} \mathrm{SW}(d,  \mathbf{x},\mathbf{t}) \\
	= & \frac{1}{2} \sum_{i \in A} u(0,x_i,t_i) + \frac{1}{2} \sum_{i \in A} u(d,x_i,t_i)
	\end{align*}
	By the fact that $u(0,x_i,t_i) + u(d,x_i,t_i) = d$ for any agent in $A$ regardless $x_i$ and $t_i$, it concludes that $\mathrm{SW}(\mathcal{M}^f(\mathbf{x},\mathbf{t}),  \mathbf{x},\mathbf{t}) = \frac{n\cdot d}{2}$. 
\end{proof}

By the fact that $\mathrm{SW}(\mathcal{M}(\mathbf{x}, \mathbf{t}), \mathcal{TTC}(\mathbf{x}, \mathcal{M}(\mathbf{x}, \mathbf{t}),\mathbf{t}), \mathbf{t}) > \mathrm{SW}(\mathcal{M}(\mathbf{x},\mathbf{t}),\mathbf{x},\mathbf{t})$ and the utility of each agent is at most $d$, the approximation of Mechanism $\mathcal{M}$ directly follows.

\section{Conclusion}
In this paper, we consider the mechanism design without payments when selfish agents have difference types of preferences. We identify a limitation of the classical approach, that is, agents could benefit by exchanging their allocations obtained from mechanisms. To cope with this issue, we present a model called mechanism design with exchangeable allocations. We consider two approaches for this model, central exchanges and individual exchanges. By using facility location games as an example, we provide a truthful mechanism that achieves the optimal social welfare in central exchanges. We also provide a universally truthful randomized mechanism that achieves at least half of the optimal social welfare in individual exchanges. It will be interesting to investigate the problem in  settings other than facility location games. We leave it as future work. 

\newpage
\bibliographystyle{named}
\bibliography{ref}

\end{document}